\newenvironment{packed_enum}{
\begin{enumerate}
  \setlength{\itemsep}{1pt}
  \setlength{\parskip}{0pt}
  \setlength{\parsep}{0pt}
}{\end{enumerate}}
\newtheorem{definition}{Definition}
\newtheorem{proposition}{Proposition}
\newtheorem{assumption}{Assumption}
\newtheorem{lemma}{Lemma}
\newtheorem{theorem}{Theorem}
\begin{document}

\title{\vspace{-1.5cm}Designing Incentive Schemes Based on Intervention:\\ The Case of Perfect Monitoring}

\author{Jaeok Park and Mihaela van der Schaar\thanks{Electrical Engineering
Department, University of California, Los Angeles (UCLA). Email: \{jaeok, mihaela\}@ee.ucla.edu.}}

\date{}

\maketitle

\vspace{-1cm}

\begin{abstract}
This paper studies a class of incentive schemes based on intervention,
where there exists an intervention device that is able to monitor the actions of users
and to take an action that affects the payoffs of users. We consider the case
of perfect monitoring, where the intervention device can immediately observe the actions of users
without errors. We also assume that there exist
actions of the intervention device that are most and least preferred
by all the users and the intervention device, regardless of the actions of users.
We derive analytical results
about the outcomes achievable with intervention, and illustrate
our results with an example based on the Cournot model.
\end{abstract}

\section{Introduction}

This paper studies incentive schemes to drive self-interested users
toward the system objective. The operation of networks by non-cooperative, self-interested
users in general leads to a suboptimal performance \cite{Dubey}. As a
result, different forms of incentive schemes to improve the performance
have been investigated in the literature.
One form of incentive schemes widely studied in economics and engineering is pricing (or more
generally, transfer of utilities) \cite{varian}. Pricing can induce efficient
use of network resources by aligning private incentives with social objectives.
Although pricing has a solid theoretical foundation, implementing a pricing
scheme can be impractical or cumbersome in some cases. Let us consider a wireless
Internet service as an example. A service provider can limit access
to its network resources by charging an access fee. However, charging an access fee
requires a secure and reliable method to process payments, which creates
burden on both sides of users and service providers. There also arises
the issue of allocative fairness when a service provider charges for the Internet service.
In the presence of the income effect, uniform pricing will bias the allocation
of network resources towards users with high incomes. Because the Internet
can play the role of an information equalizer, it has
been argued in a public policy debate that access to the Internet
should be provided as a public good by a public authority rather than
as a private good in a market \cite{Hallgren}.

Another method to provide incentives is to use repeated interaction \cite{mailath}.
Repeated interaction can encourage cooperative behavior by adjusting
future payoffs depending on current behavior. A repeated game strategy can
form a basis of an incentive scheme in which monitoring and punishment burden
is decentralized to users (see, for example, \cite{La}). However, implementing a repeated game strategy
requires repeated interaction among users, which may not be available.
For example, users interacting in a mobile network change frequently in nature.

In this paper, we study an alternative form of incentive schemes based on intervention, which
was proposed in our previous work \cite{jpark}. In an incentive scheme based on
intervention, a network is augmented with an intervention device that is able to monitor the actions of users
and to take an action that affects the payoffs of users.
Intervention directly affects the
network usage of users, unlike pricing which uses an outside instrument
to affect the payoffs of users. Thus, an incentive scheme based on intervention can provide
an effective and robust method to provide incentives in that users
cannot avoid intervention as long as they use network resources.
Moreover, it does not require long-term relationship among users, which
makes it applicable to networks with a dynamically changing user population.

As a first step toward the study of incentive schemes based on intervention, we focus
in this paper on the case of perfect monitoring, where the intervention device can immediately observe
the actions chosen by users without errors. We derive analytical results
assuming that there exist actions of the intervention device that are most and least preferred
by all the users and the intervention device, regardless of the actions of users. We then illustrate our results
with an example based on the Cournot model.

\section{Model}

We consider a network where $N$ users and an intervention device
interact. The set of the users is denoted by $\mathcal{N} = \{1,\ldots,N\}$.
The action space of user $i$ is denoted by $A_i$,
and the action of user $i$ is denoted by $a_i \in A_i$,
for all $i \in \mathcal{N}$.
An action profile is represented by
a vector $\mathbf{a} = (a_1, \ldots, a_N) \in A \triangleq \prod_{i \in \mathcal{N}} A_i$.
An action profile of the users other than user~$i$ is written as $\mathbf{a}_{-i} = (a_1,\ldots,
a_{i-1},a_{i+1}, \ldots,a_N)$ so that $\mathbf{a}$ can be expressed as $\mathbf{a} = (a_i,\mathbf{a}_{-i})$.
The intervention device observes the actions chosen by
the users immediately, and then it chooses its own action.
The action space of the intervention device is denoted by $A_0$,
and its action is denoted by $a_0 \in A_0$.
For convenience, we sometimes call the intervention device user 0.
The set of the users and the intervention device is denoted by
$\mathcal{N}_0 = \mathcal{N} \cup \{0\}$.

The actions of the intervention device and the users jointly determine
their payoffs. The payoff function of user $i \in \mathcal{N}_0$ is
denoted by $u_i: A_0 \times A \rightarrow \mathbb{R}$. That is,
$u_i(a_0,\mathbf{a})$ represents the payoff that user $i$ receives when the
intervention device chooses action $a_0$ and the users choose an
action profile $\mathbf{a}$. In particular, the payoff of the intervention device,
$u_0(a_0,\mathbf{a})$, can be interpreted as the system objective.
Since the intervention device can choose
its action knowing the actions chosen by the users, a strategy
for it can be represented by a function $f:A \rightarrow A_0$,
which is called an intervention rule. The set of all possible intervention
rules is denoted by $F$.

Suppose that there is a network
manager who determines the intervention rule used by the intervention
device. We assume that the manager can commit to an intervention rule,
for example, by using a protocol embedded in the intervention device.
The game played by the manager and the users is called an intervention game.
The sequence of events in an intervention game can be listed as follows.
\begin{packed_enum}
\item The manager chooses an intervention rule $f \in F$.
\item The users choose their actions $\mathbf{a} \in A$, knowing
the intervention rule $f$ chosen by the manager.
\item The intervention device observes the action profile $\mathbf{a} \in A$
and takes an action $a_0 = f(\mathbf{a}) \in A_0$.
\end{packed_enum}

The payoff function of user $i \in \mathcal{N}_0$ provided that the manager
has chosen an intervention rule $f$ is given by $v_i^f:A \rightarrow \mathbb{R}$,
where
\begin{align}
v_i^f(\mathbf{a}) = u_i(f(\mathbf{a}),\mathbf{a}).
\end{align}
An intervention rule $f$ induces
a simultaneous game played by the users, whose normal form
representation is given by
\begin{align}
\Gamma_f = \left\langle N, (A_i)_{i \in \mathcal{N}}, (v_i^f)_{i \in \mathcal{N}}
\right\rangle.
\end{align}
We can predict actions chosen by the users given an
intervention rule $f$ by applying the solution concept of Nash
equilibrium to the induced game $\Gamma_f$.

\begin{definition}
An intervention rule $f \in {F}$ \emph{sustains} an action profile
$\mathbf{a}^* \in A$ if $\mathbf{a}^*$ is a Nash equilibrium of
the game $\Gamma_f$, i.e.,
\begin{align}
v_i^f(\mathbf{a}^*) \geq v_i^f(a_i,\mathbf{a}_{-i}^*) \quad \text{for all $a_i \in
A_i$, for all $i \in \mathcal{N}$}.
\end{align}
An action profile $\mathbf{a}^*$ is \emph{sustainable} if there exists
an intervention rule $f$ that sustains $\mathbf{a}^*$.
\end{definition}

Let $\mathcal{E}(f) \subseteq A$ be the set of action profiles sustained by $f$.
Then the set of all sustainable action profiles is given by $\mathcal{E} = \cup_{f \in {F}}
\mathcal{E}(f)$. A pair of an intervention rule $f$ and an action profile $\mathbf{a}$ is said
to be attainable if $f$ sustains $\mathbf{a}$. The manager's problem is to
find an attainable pair that maximizes the payoff of the intervention
device among all attainable pairs.

\begin{definition} \label{def:ie}
$(f^*,\mathbf{a}^*) \in {F} \times A$ is an \emph{intervention equilibrium}
if $\mathbf{a}^* \in \mathcal{E}(f^*)$ and
\begin{align}
v_0^{f^*}(\mathbf{a}^*) \geq v_0^f(\mathbf{a})
\end{align}
for all $(f,\mathbf{a}) \in {F} \times A$ such that $\mathbf{a} \in \mathcal{E}(f)$.
$f^* \in {F}$ is an \emph{optimal intervention rule} if there exists
an action profile $\mathbf{a}^* \in A$ such that $(f^*,\mathbf{a}^*)$ is an
intervention equilibrium.
\end{definition}

Intervention equilibrium is a solution concept for intervention
games, based on a backward induction argument. An intervention equilibrium
can be considered as a subgame
perfect equilibrium applied to an intervention game, since
the induced game $\Gamma_f$ is a subgame of an intervention game. It is
implicitly assumed that the manager can induce the users
to choose the best Nash equilibrium for the system in case of multiple Nash
equilibria. One possible explanation for this is that the manager recommends
to the users an action profile sustained by the intervention rule he chooses
so that the action profile becomes a focal point \cite{fudenberg}.
The manager's problem of finding an optimal intervention
rule can be expressed as
\begin{align}
\max_{f \in F} \max_{\mathbf{a} \in \mathcal{E}(f)} v_0^f(\mathbf{a}).
\end{align}

\section{Analytical Results}

In this section, we derive analytical results about sustainable action
profiles and intervention equilibria imposing the following assumption.

\begin{assumption} \label{ass:permon}
There exist $\underline{a}_0, \overline{a}_0 \in A_0$ such that
for all $i \in \mathcal{N}_0$,
\begin{align} \label{eq:a0order}
u_i(\underline{a}_0,\mathbf{a}) \geq u_i(a_0,\mathbf{a}) \geq u_i(\overline{a}_0,\mathbf{a})
\quad \text{for all $a_0 \in A_0$, for all $\mathbf{a} \in A$}.
\end{align}
\end{assumption}

$\underline{a}_0$ and $\overline{a}_0$ can be interpreted as the
minimal and maximal intervention actions of the intervention device,
respectively. For given $\mathbf{a} \in A$, the users and the intervention device
receive the highest (resp. lowest) payoff when the intervention device takes the minimal
(resp. maximal) intervention action. This allows the intervention device to
reward or punish all the users at the same time.

We first characterize the set of sustainable action profiles, $\mathcal{E}$.
The following class of intervention rules is useful to characterize $\mathcal{E}$.
\begin{definition}
$f_{\tilde{\mathbf{a}}}:A \rightarrow A_0$ is an \emph{extreme intervention
rule with target action profile $\tilde{\mathbf{a}} \in A$} if
\begin{align} \label{eq:polar}
f_{\tilde{\mathbf{a}}}(\mathbf{a}) = \left\{
\begin{array}{ll}
\underline{a}_0 \quad &\textrm{if $\mathbf{a} = \tilde{\mathbf{a}}$,}\\
\overline{a}_0 \quad &\textrm{otherwise}.
\end{array} \right.
\end{align}
\end{definition}

Note that an extreme intervention rule uses only the two extreme
points of $A_0$. With an extreme intervention rule, the intervention device
chooses the most preferred action for the users when they
follow the target action profile while choosing the least preferred
action when they deviate. Hence, an extreme intervention rule
provides the strongest incentive for sustaining a given target
action profile, which leads us to the following lemma.

\begin{lemma}
If $\mathbf{a}^* \in \mathcal{E}$, then $\mathbf{a}^* \in \mathcal{E}(f_{\mathbf{a}^*})$.
\end{lemma}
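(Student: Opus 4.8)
The plan is to unwind the hypothesis, reduce the goal to a single chain of inequalities, and close that chain using Assumption~\ref{ass:permon} together with the Nash-equilibrium condition for the sustaining rule. Since $\mathbf{a}^* \in \mathcal{E}$, by definition there exists some $f \in F$ with $\mathbf{a}^* \in \mathcal{E}(f)$, which unpacks to
\begin{align}
u_i(f(\mathbf{a}^*),\mathbf{a}^*) \geq u_i(f(a_i,\mathbf{a}_{-i}^*),(a_i,\mathbf{a}_{-i}^*)) \quad \text{for all $a_i \in A_i$, for all $i \in \mathcal{N}$.}
\end{align}
The target is to verify the analogous inequalities for $f_{\mathbf{a}^*}$, namely $v_i^{f_{\mathbf{a}^*}}(\mathbf{a}^*) \geq v_i^{f_{\mathbf{a}^*}}(a_i,\mathbf{a}_{-i}^*)$ for every $i$ and every $a_i$.

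First I would evaluate both sides of the goal inequality using the definition of the extreme rule. On the equilibrium side, $f_{\mathbf{a}^*}(\mathbf{a}^*) = \underline{a}_0$, so $v_i^{f_{\mathbf{a}^*}}(\mathbf{a}^*) = u_i(\underline{a}_0,\mathbf{a}^*)$. On the deviation side I would split into two cases: if $a_i = a_i^*$ the inequality is an equality and holds trivially, while if $a_i \neq a_i^*$ then $(a_i,\mathbf{a}_{-i}^*) \neq \mathbf{a}^*$, so $f_{\mathbf{a}^*}$ returns $\overline{a}_0$ and $v_i^{f_{\mathbf{a}^*}}(a_i,\mathbf{a}_{-i}^*) = u_i(\overline{a}_0,(a_i,\mathbf{a}_{-i}^*))$. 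Thus the goal reduces to showing $u_i(\underline{a}_0,\mathbf{a}^*) \geq u_i(\overline{a}_0,(a_i,\mathbf{a}_{-i}^*))$ for all $a_i \neq a_i^*$ and all $i$.

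To close this, I would chain three inequalities. Applying the left half of \eqref{eq:a0order} at $\mathbf{a} = \mathbf{a}^*$ gives $u_i(\underline{a}_0,\mathbf{a}^*) \geq u_i(f(\mathbf{a}^*),\mathbf{a}^*)$; the equilibrium condition for $f$ gives $u_i(f(\mathbf{a}^*),\mathbf{a}^*) \geq u_i(f(a_i,\mathbf{a}_{-i}^*),(a_i,\mathbf{a}_{-i}^*))$; and the right half of \eqref{eq:a0order} at $\mathbf{a} = (a_i,\mathbf{a}_{-i}^*)$ gives $u_i(f(a_i,\mathbf{a}_{-i}^*),(a_i,\mathbf{a}_{-i}^*)) \geq u_i(\overline{a}_0,(a_i,\mathbf{a}_{-i}^*))$. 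Concatenating the three yields the desired inequality, so $f_{\mathbf{a}^*}$ sustains $\mathbf{a}^*$ and hence $\mathbf{a}^* \in \mathcal{E}(f_{\mathbf{a}^*})$.

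There is no deep obstacle here; the argument is essentially bookkeeping. The one point demanding care is getting the two applications of Assumption~\ref{ass:permon} pointing in the correct directions — raising the payoff at the target, where $f_{\mathbf{a}^*}$ uses $\underline{a}_0$, and lowering it at every deviation, where it uses $\overline{a}_0$ — which is precisely the informal statement that the extreme rule provides the strongest incentive for the target profile. Recognizing that an arbitrary sustaining rule $f$ can thereby be replaced by $f_{\mathbf{a}^*}$ without breaking any equilibrium constraint is the conceptual crux.
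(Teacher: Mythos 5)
Your proposal is correct and follows essentially the same argument as the paper: both take an arbitrary sustaining rule $f$ and chain the left half of Assumption~\ref{ass:permon} at $\mathbf{a}^*$, the Nash condition for $f$, and the right half of Assumption~\ref{ass:permon} at the deviation profile. Your explicit case split on $a_i = a_i^*$ versus $a_i \neq a_i^*$ is a minor bookkeeping refinement the paper leaves implicit, not a different route.
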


\begin{proof}
Suppose that $\mathbf{a}^* \in \mathcal{E}$. Then there exists an
intervention rule $f$ such that $u_i(f(\mathbf{a}^*), \mathbf{a}^*) \geq u_i(f(a_i, \mathbf{a}_{-i}^*),
a_i, \mathbf{a}_{-i}^*)$ for all $a_i \in A_i$, for all $i \in \mathcal{N}$. Then we
obtain $u_i(\underline{a}_0,\mathbf{a}^*) \geq u_i(f(\mathbf{a}^*), \mathbf{a}^*) \geq u_i(f(a_i, \mathbf{a}_{-i}^*),
a_i, \mathbf{a}_{-i}^*) \geq u_i(\overline{a}_0, a_i, \mathbf{a}_{-i}^*)$ for all $a_i
\in A_i$, for all $i \in \mathcal{N}$, where the first and the third
inequalities follow from \eqref{eq:a0order}.
\end{proof}

Let $F^e$ be the set of all extreme intervention rules, i.e., $F^e =
\{ f_{\tilde{\mathbf{a}}} \in F : \tilde{\mathbf{a}} \in A \}$. Also, define
$\mathcal{E}^e = \cup_{f \in F^e} \mathcal{E}(f) = \{ \mathbf{a} \in A :
\exists f \in F^e \text{ such that $f$ sustains $\mathbf{a}$} \}$. By
applying Lemma 1, we can obtain the following results.

\begin{theorem} \label{prop:char}
(i) $\mathbf{a}^* \in \mathcal{E}$ if and only if $u_i(\underline{a}_0,\mathbf{a}^*)
\geq u_i(\overline{a}_0, a_i, \mathbf{a}_{-i}^*)$ for all $a_i \in A_i$, for
all $i \in \mathcal{N}$.\\
(ii) $\mathcal{E} = \mathcal{E}^e$.\\
(iii) If $(f^*,\mathbf{a}^*)$ is an intervention equilibrium, then $(f_{\mathbf{a}^*},
\mathbf{a}^*)$ is also an intervention equilibrium.
\end{theorem}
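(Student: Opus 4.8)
The plan is to deduce all three parts from Lemma 1 together with the extreme-point ordering in \eqref{eq:a0order}, since Lemma 1 already reduces the question of sustainability to the single rule $f_{\mathbf{a}^*}$. The recurring computation I would isolate first is that, under $f_{\mathbf{a}^*}$, the induced payoffs collapse onto the two extremes: $v_i^{f_{\mathbf{a}^*}}(\mathbf{a}^*) = u_i(\underline{a}_0,\mathbf{a}^*)$, while $v_i^{f_{\mathbf{a}^*}}(a_i,\mathbf{a}_{-i}^*) = u_i(\overline{a}_0,a_i,\mathbf{a}_{-i}^*)$ whenever $a_i \neq a_i^*$ (so that $(a_i,\mathbf{a}_{-i}^*) \neq \mathbf{a}^*$). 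This identity is the engine behind every part.

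For part (i), I would argue both directions. For ``only if,'' assume $\mathbf{a}^* \in \mathcal{E}$; Lemma 1 gives $\mathbf{a}^* \in \mathcal{E}(f_{\mathbf{a}^*})$, so the Nash inequality for $\Gamma_{f_{\mathbf{a}^*}}$ reads $u_i(\underline{a}_0,\mathbf{a}^*) \geq u_i(\overline{a}_0,a_i,\mathbf{a}_{-i}^*)$ for every $a_i \neq a_i^*$, while the leftover case $a_i = a_i^*$ is handed to us directly by \eqref{eq:a0order}, which gives $u_i(\underline{a}_0,\mathbf{a}^*) \geq u_i(\overline{a}_0,\mathbf{a}^*)$; together these cover all $a_i \in A_i$. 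For ``if,'' assume the stated inequalities and verify directly that $f_{\mathbf{a}^*}$ sustains $\mathbf{a}^*$: each unilateral deviation yields payoff $u_i(\overline{a}_0,a_i,\mathbf{a}_{-i}^*)$, which by hypothesis is no larger than $u_i(\underline{a}_0,\mathbf{a}^*) = v_i^{f_{\mathbf{a}^*}}(\mathbf{a}^*)$, so $\mathbf{a}^* \in \mathcal{E}(f_{\mathbf{a}^*}) \subseteq \mathcal{E}$.

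Part (ii) is then almost immediate: the inclusion $\mathcal{E}^e \subseteq \mathcal{E}$ holds simply because $F^e \subseteq F$, and the reverse inclusion is exactly Lemma 1, which places every $\mathbf{a}^* \in \mathcal{E}$ into $\mathcal{E}(f_{\mathbf{a}^*}) \subseteq \mathcal{E}^e$. For part (iii), given an intervention equilibrium $(f^*,\mathbf{a}^*)$, Lemma 1 first yields $\mathbf{a}^* \in \mathcal{E}(f_{\mathbf{a}^*})$, so $(f_{\mathbf{a}^*},\mathbf{a}^*)$ is attainable. I would then chain $v_0^{f_{\mathbf{a}^*}}(\mathbf{a}^*) = u_0(\underline{a}_0,\mathbf{a}^*) \geq u_0(f^*(\mathbf{a}^*),\mathbf{a}^*) = v_0^{f^*}(\mathbf{a}^*) \geq v_0^f(\mathbf{a})$ for every attainable $(f,\mathbf{a})$, where the first inequality is \eqref{eq:a0order} applied to $i=0$ and the last is optimality of $(f^*,\mathbf{a}^*)$. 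This verifies both defining conditions for $(f_{\mathbf{a}^*},\mathbf{a}^*)$.

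The only genuine subtlety is conceptual rather than computational: one must remember that Assumption \ref{ass:permon} quantifies over $i \in \mathcal{N}_0$, so $\underline{a}_0$ is simultaneously the best action for all users (which drives parts (i) and (ii)) and for the intervention device itself (which is precisely what makes the first inequality in part (iii) go through). I expect no other obstacle.
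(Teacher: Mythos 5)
Your proof is correct and takes essentially the same approach as the paper's: every part is deduced from Lemma 1 together with the extreme-point ordering \eqref{eq:a0order}, using the extreme intervention rule $f_{\mathbf{a}^*}$ throughout. The only differences are cosmetic: you spell out the $a_i = a_i^*$ case in part (i) that the paper leaves implicit in ``the converse follows from Lemma 1,'' and in part (iii) you verify optimality via the direct chain $v_0^{f_{\mathbf{a}^*}}(\mathbf{a}^*) = u_0(\underline{a}_0,\mathbf{a}^*) \geq v_0^{f^*}(\mathbf{a}^*) \geq v_0^f(\mathbf{a})$ rather than, as the paper does, first establishing the equality $v_0^{f^*}(\mathbf{a}^*) = u_0(f_{\mathbf{a}^*}(\mathbf{a}^*),\mathbf{a}^*)$ from both inequality directions, but the ingredients are identical.
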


\begin{proof}
(i) Suppose that $u_i(\underline{a}_0,\mathbf{a}^*) \geq u_i(\overline{a}_0,
a_i, \mathbf{a}_{-i}^*)$ for all $a_i \in A_i$, for all $i \in \mathcal{N}$. Then
$f_{\mathbf{a}^*}$ sustains $\mathbf{a}^*$, and thus $\mathbf{a}^* \in \mathcal{E}$. The
converse follows from Lemma 1.

(ii) $\mathcal{E} \supset \mathcal{E}^e$ follows from $F \supset
F^e$, while $\mathcal{E} \subset \mathcal{E}^e$ follows from Lemma
1.

(iii) Suppose that $(f^*,\mathbf{a}^*)$ is an intervention equilibrium. Then
by Definition~\ref{def:ie}, $f^*$ sustains $\mathbf{a}^*$, and
$v_0^{f^*}(\mathbf{a}^*) \geq v_0^f(\mathbf{a})$ for all $(f,\mathbf{a}) \in {F} \times A$ such
that $\mathbf{a} \in \mathcal{E}(f)$. Since $\mathbf{a}^* \in \mathcal{E}$,
$\mathbf{a}^* \in \mathcal{E}(f_{\mathbf{a}^*})$ by Lemma 1. Hence, $v_0^{f^*}(\mathbf{a}^*) \geq
u_0(f_{\mathbf{a}^*}(\mathbf{a}^*),\mathbf{a}^*)$. On the other hand, since $f_{\mathbf{a}^*}(\mathbf{a}^*) =
\underline{a}_0$, we have $v_0^{f^*}(\mathbf{a}^*) \leq
u_0(f_{a^*}(\mathbf{a}^*),\mathbf{a}^*)$ by \eqref{eq:a0order}. Therefore,
$v_0^{f^*}(\mathbf{a}^*) = u_0(f_{\mathbf{a}^*}(\mathbf{a}^*),\mathbf{a}^*)$, and thus
$u_0(f_{\mathbf{a}^*}(\mathbf{a}^*),\mathbf{a}^*) \geq v_0^f(\mathbf{a})$ for all $(f,\mathbf{a}) \in {F} \times
A$ such that $\mathbf{a} \in \mathcal{E}(f)$. This proves that $(f_{\mathbf{a}^*}, \mathbf{a}^*)$ is
an intervention equilibrium.
\end{proof}

Theorem 1 shows that there is no loss of generality in three senses
when we restrict attention to extreme intervention rules. First, in
order to test whether there exists an intervention rule that
sustains a given action profile, it suffices to consider only the
extreme intervention rule having the action profile as its target
action profile. Second, the set of action profiles that can be
sustained by an intervention rule remains the same when we consider
only extreme intervention rules. Third, if there exists an optimal
intervention rule, we can find an optimal intervention rule among
extreme intervention rules.

Note that the role of extreme intervention rules is analogous to
that of trigger strategies in repeated games with perfect
monitoring. To generate the set of equilibrium payoffs, it suffices
to consider trigger strategies that trigger the most severe
punishment in case of a deviation. Under Assumption~\ref{ass:permon}, the
maximal intervention action $\overline{a}_0$ plays a similar role to
mutual minmaxing \cite{mailath} in that it provides
the strongest threat to deter a deviation. The next theorem provides
a necessary and sufficient condition under which an extreme
intervention rule together with its target action profile
constitutes an intervention equilibrium.

\begin{theorem} \label{prop:charie}
$(f_{\mathbf{a}^*}, \mathbf{a}^*)$ is an intervention equilibrium if and only if $\mathbf{a}^*
\in \mathcal{E}$ and $u_0(\underline{a}_0, \mathbf{a}^*) \geq
u_0(\underline{a}_0, \mathbf{a})$ for all $\mathbf{a} \in \mathcal{E}$.
\end{theorem}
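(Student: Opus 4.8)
The plan is to prove a biconditional characterizing when the extreme intervention rule $f_{\mathbf{a}^*}$ paired with its own target $\mathbf{a}^*$ forms an intervention equilibrium. The key observation that drives everything is that under the extreme rule $f_{\mathbf{a}^*}$, the device plays $\underline{a}_0$ exactly at $\mathbf{a}^*$, so $v_0^{f_{\mathbf{a}^*}}(\mathbf{a}^*) = u_0(\underline{a}_0,\mathbf{a}^*)$. This lets me translate the abstract optimality condition in Definition~\ref{def:ie} into a concrete statement about $u_0(\underline{a}_0,\cdot)$ evaluated over the sustainable set $\mathcal{E}$.

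Let me outline the proof.

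First, for the forward direction, suppose $(f_{\mathbf{a}^*},\mathbf{a}^*)$ is an intervention equilibrium. Then by Definition~\ref{def:ie} we immediately have $\mathbf{a}^* \in \mathcal{E}(f_{\mathbf{a}^*}) \subseteq \mathcal{E}$. For the optimality inequality, I take any $\mathbf{a} \in \mathcal{E}$. By Theorem~\ref{prop:char}(ii) we have $\mathbf{a} \in \mathcal{E}^e$, so in particular $\mathbf{a} \in \mathcal{E}(f_{\mathbf{a}})$, i.e. $f_{\mathbf{a}}$ sustains $\mathbf{a}$. Thus $(f_{\mathbf{a}},\mathbf{a})$ is an admissible competitor in the maximization, and the equilibrium condition gives $v_0^{f_{\mathbf{a}^*}}(\mathbf{a}^*) \geq v_0^{f_{\mathbf{a}}}(\mathbf{a})$. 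Now I use that both sides collapse to $u_0(\underline{a}_0,\cdot)$: the left is $u_0(\underline{a}_0,\mathbf{a}^*)$ since $f_{\mathbf{a}^*}(\mathbf{a}^*)=\underline{a}_0$, and the right is $u_0(\underline{a}_0,\mathbf{a})$ since $f_{\mathbf{a}}(\mathbf{a})=\underline{a}_0$. This yields $u_0(\underline{a}_0,\mathbf{a}^*) \geq u_0(\underline{a}_0,\mathbf{a})$ for all $\mathbf{a} \in \mathcal{E}$, as desired.

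For the converse, suppose $\mathbf{a}^* \in \mathcal{E}$ and $u_0(\underline{a}_0,\mathbf{a}^*) \geq u_0(\underline{a}_0,\mathbf{a})$ for all $\mathbf{a} \in \mathcal{E}$. Since $\mathbf{a}^* \in \mathcal{E}$, Lemma~1 gives $\mathbf{a}^* \in \mathcal{E}(f_{\mathbf{a}^*})$, so $f_{\mathbf{a}^*}$ sustains $\mathbf{a}^*$. It remains to verify the optimality inequality against an arbitrary attainable pair $(f,\mathbf{a})$ with $\mathbf{a} \in \mathcal{E}(f)$. Here is where I need the key bounding step: for any such pair, $\mathbf{a} \in \mathcal{E}$, and by Assumption~\ref{ass:permon} the device's realized payoff is bounded by the minimal-intervention payoff, $v_0^f(\mathbf{a}) = u_0(f(\mathbf{a}),\mathbf{a}) \leq u_0(\underline{a}_0,\mathbf{a})$. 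Chaining this with the hypothesis and the collapse identity $u_0(\underline{a}_0,\mathbf{a}^*) = v_0^{f_{\mathbf{a}^*}}(\mathbf{a}^*)$ gives $v_0^{f_{\mathbf{a}^*}}(\mathbf{a}^*) = u_0(\underline{a}_0,\mathbf{a}^*) \geq u_0(\underline{a}_0,\mathbf{a}) \geq v_0^f(\mathbf{a})$, which is exactly the intervention-equilibrium condition.

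**The main obstacle** is conceptual rather than computational: recognizing that one must quantify the optimality condition over two different families of competitors. In the forward direction the competitor is restricted to extreme rules $(f_{\mathbf{a}},\mathbf{a})$ so that the right-hand side simplifies cleanly to $u_0(\underline{a}_0,\mathbf{a})$, whereas in the converse the competitor $(f,\mathbf{a})$ ranges over \emph{all} attainable pairs, and there I cannot simplify $v_0^f(\mathbf{a})$ exactly but must instead bound it above using Assumption~\ref{ass:permon}. Keeping straight which direction calls for an exact identity and which calls for the inequality $u_0(f(\mathbf{a}),\mathbf{a}) \leq u_0(\underline{a}_0,\mathbf{a})$ is the crux; once that asymmetry is handled, the rest is bookkeeping with the already-established Theorem~\ref{prop:char}.
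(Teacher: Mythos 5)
Your proof is correct and follows essentially the same route as the paper's: the forward direction tests the equilibrium condition against the competitor $(f_{\mathbf{a}},\mathbf{a})$ and uses the collapse identity $f_{\mathbf{a}}(\mathbf{a})=\underline{a}_0$, while the converse chains the hypothesis with the bound $v_0^f(\mathbf{a}) \leq u_0(\underline{a}_0,\mathbf{a})$ from Assumption~\ref{ass:permon}. One small citation quibble: the step ``$\mathbf{a} \in \mathcal{E}^e$, so in particular $\mathbf{a} \in \mathcal{E}(f_{\mathbf{a}})$'' does not follow from the definition of $\mathcal{E}^e$ alone (membership only guarantees that \emph{some} extreme rule sustains $\mathbf{a}$, not necessarily the one with target $\mathbf{a}$); the fact you need there is exactly Lemma~1, which is what the paper invokes at the corresponding point.
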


\begin{proof}
Suppose that $(f_{\mathbf{a}^*}, \mathbf{a}^*)$ is an intervention equilibrium. Then
$f_{\mathbf{a}^*}$ sustains $\mathbf{a}^*$, and thus $\mathbf{a}^* \in \mathcal{E}$. Also,
$u_0(f_{\mathbf{a}^*}(\mathbf{a}^*),\mathbf{a}^*) \geq v_0^f(\mathbf{a})$ for all $(f,\mathbf{a}) \in F \times A$
such that $\mathbf{a} \in \mathcal{E}(f)$. Choose any $\mathbf{a} \in \mathcal{E}$. Then by
Lemma 1, $f_\mathbf{a}$ sustains $\mathbf{a}$, and thus $u_0(\underline{a}_0, \mathbf{a}^*) =
u_0(f_{\mathbf{a}^*}(\mathbf{a}^*),\mathbf{a}^*) \geq u_0(f_\mathbf{a}(\mathbf{a}),\mathbf{a}) = u_0(\underline{a}_0, \mathbf{a})$.

Suppose that $\mathbf{a}^* \in \mathcal{E}$ and $u_0(\underline{a}_0, \mathbf{a}^*)
\geq u_0(\underline{a}_0, \mathbf{a})$ for all $\mathbf{a} \in \mathcal{E}$. To prove
that $(f_{\mathbf{a}^*},\mathbf{a}^*)$ is an intervention equilibrium, we need to show
(i) $f_{\mathbf{a}^*}$ sustains $\mathbf{a}^*$, and (ii) $u_0(f_{\mathbf{a}^*}(\mathbf{a}^*),\mathbf{a}^*) \geq
v_0^f(\mathbf{a})$ for all $(f,\mathbf{a}) \in {F} \times A$ such that
$\mathbf{a} \in \mathcal{E}(f)$. Since $\mathbf{a}^* \in \mathcal{E}$, (i) follows from Lemma 1. To prove
(ii), choose any $(f,\mathbf{a}) \in {F} \times A$ such that $\mathbf{a} \in \mathcal{E}(f)$.
Then $u_0(f_{\mathbf{a}^*}(\mathbf{a}^*),\mathbf{a}^*) = u_0(\underline{a}_0, \mathbf{a}^*) \geq
u_0(\underline{a}_0, \mathbf{a}) \geq v_0^f(\mathbf{a})$, where the first inequality
follows from $\mathbf{a} \in \mathcal{E}$.
\end{proof}

Theorem~\ref{prop:charie} implies that if we obtain an action
profile $\mathbf{a}^*$ such that $\mathbf{a}^* \in \arg \max_{\mathbf{a} \in \mathcal{E}}
u_0(\underline{a}_0, \mathbf{a})$, we can use it to construct an intervention
equilibrium and thus an optimal intervention rule.

\section{Illustrative Example}

In this section, we discuss an example to illustrate the
results in Section 3. Consider a wireless network with
two users and an intervention device interfering with each other. The action of
user $i$ is its usage level, where $A_i = [0, \overline{a}_i]$ for
$i = 0,1,2$. $\overline{a}_i$ is the maximum usage level of user
$i$. The total usage level is given by $a_0 + a_1 + a_2$. The
quality of service is determined by the total usage level, following
the relationship
\begin{align}
Q(a_0, a_1, a_2) = [q - b(a_0 + a_1 + a_2)]^+,
\end{align}
where $q,b > 0$ and $[x]^+ = \max \{x, 0\}$. The payoff of
user $i \in \{1,2\}$ is given by the product of the quality received and its
usage level,
\begin{align} \label{eq:payoffexample}
u_i(a_0, a_1, a_2) = Q(a_0, a_1, a_2) a_i.
\end{align}
The system objective is given by social welfare, which is defined as
the sum of the payoffs of the users,
\begin{align}
u_0(a_0, a_1, a_2) = u_1(a_0, a_1, a_2) + u_2(a_0, a_1, a_2).
\end{align}
Note that if there is no intervention device (i.e., if $a_0$ is held
fixed at 0), the example is identical to the Cournot duopoly model with
a linear demand function and zero production cost. The corresponding
Cournot duopoly game achieves the symmetric social optimum at $a_1 =
a_2 = a_L := q/4b$ while it has the unique Cournot-Nash equilibrium
at $a_1 = a_2 = a_H := q/3b$, as depicted in
Figure~\ref{fig:cournot}. Hence, the goal of the manager is to improve upon
the inefficient outcome $(a_H,a_H)$ by introducing the intervention device in the network.

\begin{figure}%
\centering \hspace{-15mm}
\includegraphics[width=0.5\textwidth]{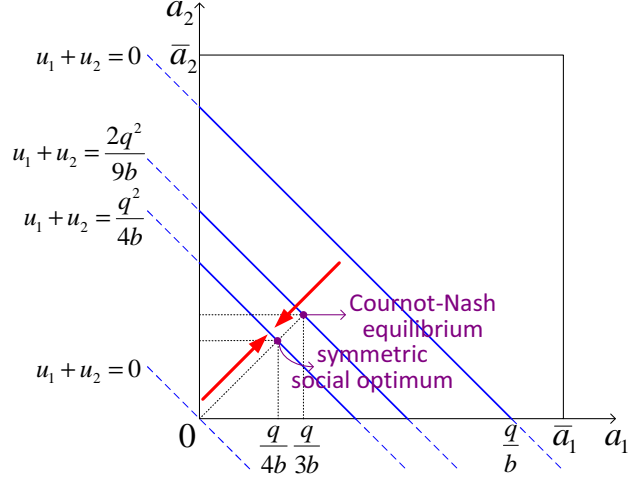}
\caption{Contour lines of social welfare in the Cournot duopoly
game.} \label{fig:cournot}
\end{figure}

Given the structure of the intervention game in this example,
the capability of the intervention device is determined by its
maximum intervention level $\overline{a}_0$. In the following, we
investigate sustainable action profiles and those that constitute an intervention
equilibrium as we vary $\overline{a}_0$.

\begin{proposition}
(i) If $\overline{a}_0 = 0$, then $\mathcal{E} = \{(a_H,a_H)\}$.\\
(ii) If $\overline{a}_0 \geq q/b$, then $\mathcal{E} = A$.\\
(iii) If $\overline{a}_0 \geq (3\sqrt{2} - 4)q/ 4\sqrt{2} b$, then
$\{(a_L,a_L)\} \in \mathcal{E}$ and thus $\{(a_L,a_L)\}$ constitutes
an intervention equilibrium.
\end{proposition}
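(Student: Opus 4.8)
The plan is to route all three parts through the sustainability characterization of Theorem~\ref{prop:char}(i), after first identifying the extreme intervention actions. Since $Q$ is decreasing in $a_0$ and each $u_i$ is increasing in $Q$, Assumption~\ref{ass:permon} holds with $\underline{a}_0 = 0$ (no intervention, highest quality, hence highest payoffs) and $\overline{a}_0$ equal to the maximum intervention level (lowest quality). Theorem~\ref{prop:char}(i) then reads: $(a_1^*,a_2^*) \in \mathcal{E}$ iff for each user $i$ and every feasible deviation $a_i$,
\[
[q - b(a_1^* + a_2^*)]^+\, a_i^* \;\geq\; [q - b(\overline{a}_0 + a_i + a_{-i}^*)]^+\, a_i.
\]
For part (i), setting $\overline{a}_0 = 0$ makes both extremes coincide, so this collapses to the Nash condition $u_i(0,\mathbf{a}^*) \geq u_i(0,a_i,a_{-i}^*)$ of the induced game with $a_0$ fixed at $0$ --- exactly the Cournot duopoly --- whose unique Nash equilibrium is the interior point $(a_H,a_H)$ by the standard best-response analysis; hence $\mathcal{E} = \{(a_H,a_H)\}$. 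For part (ii), I would note that $\overline{a}_0 \geq q/b$ forces $q - b(\overline{a}_0 + a_i + a_{-i}^*) \leq 0$, so the right-hand side above vanishes for every deviation while the left-hand side is always nonnegative; thus every profile is sustainable and $\mathcal{E} = A$.

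The substance is part (iii). By symmetry it suffices to verify the condition for one user against the target $(a_L,a_L)$ with $a_L = q/4b$. The equilibrium payoff is $[q - 2b a_L]^+ a_L = (q/2)(q/4b) = q^2/8b$. For the best deviation I would maximize $[c - b a_1]^+ a_1$ over $a_1 \geq 0$, where $c := q - b\overline{a}_0 - b a_L = 3q/4 - b\overline{a}_0$; the interior maximizer is $a_1 = c/2b$ with value $c^2/4b$ (and value $0$ when $c \leq 0$). Sustainability of $(a_L,a_L)$ is therefore equivalent to $q^2/8b \geq c^2/4b$, i.e. $c \leq q/\sqrt{2}$, which upon substituting $c = 3q/4 - b\overline{a}_0$ rearranges to exactly $\overline{a}_0 \geq (3/4 - 1/\sqrt{2})q/b = (3\sqrt{2}-4)q/4\sqrt{2}b$. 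This recovers the stated threshold, so $(a_L,a_L) \in \mathcal{E}$ precisely on this range.

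To upgrade sustainability to an intervention equilibrium I would invoke Theorem~\ref{prop:charie}: $(f_{(a_L,a_L)},(a_L,a_L))$ is an intervention equilibrium iff $(a_L,a_L) \in \mathcal{E}$ and $u_0(\underline{a}_0, a_L, a_L) \geq u_0(\underline{a}_0, \mathbf{a})$ for all $\mathbf{a} \in \mathcal{E}$. Since $u_0(0,a_1,a_2) = [q - b(a_1+a_2)]^+(a_1+a_2)$ depends only on the total usage $a_1+a_2$ and is maximized at total usage $q/2b$, and $(a_L,a_L)$ yields exactly $a_1 + a_2 = q/2b$, this profile is the global social-welfare maximizer over all of $A$, and \emph{a fortiori} over $\mathcal{E} \subseteq A$. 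Hence the second condition holds automatically, and $(a_L,a_L)$ constitutes an intervention equilibrium.

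I expect the main obstacle to be the deviation optimization in part (iii): one must handle the $[\,\cdot\,]^+$ correctly (the maximizer lives in the region where quality is positive), confirm that the interior maximizer $a_1 = c/2b$ lies inside $[0,\overline{a}_1]$ --- which holds as long as the usage caps are not binding, since $c/2b \leq 3q/8b$ --- and carry the square-root algebra cleanly so that $c \leq q/\sqrt{2}$ matches the advertised constant. The remaining steps are essentially immediate applications of Theorems~\ref{prop:char} and~\ref{prop:charie}.
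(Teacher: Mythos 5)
Your proposal is correct and follows essentially the same route as the paper: identify $\underline{a}_0 = 0$ and the maximum intervention level as the two extreme actions of Assumption~\ref{ass:permon}, reduce each part to the sustainability test of Theorem~\ref{prop:char}(i), and upgrade part (iii) to an intervention equilibrium via Theorem~\ref{prop:charie} together with the observation that $(a_L,a_L)$ maximizes $u_0(\underline{a}_0,\cdot)$ over all of $A$. The paper only sketches this argument informally, so your explicit deviation computation in (iii) --- best deviation $c^2/4b$ with $c = 3q/4 - b\overline{a}_0$, and the condition $c \leq q/\sqrt{2}$ rearranging to the stated threshold --- is a faithful and complete fleshing-out of what the paper leaves implicit. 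One caveat, shared with the paper rather than a divergence from it: in part (i) the appeal to ``standard best-response analysis'' for uniqueness is valid only when the usage caps satisfy $\overline{a}_i < q/b$; with the paper's own numerical parameters ($\overline{a}_1 = \overline{a}_2 = 12 = q/b$) the profile $(q/b,q/b)$ is also a Nash equilibrium of the induced Cournot game (quality is already zero there and no unilateral deviation can make it positive), a degenerate flooding equilibrium that both your argument and the paper's overlook.
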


If the intervention device cannot affect the payoffs of the users
($\overline{a}_0 = 0$), the non-cooperative outcome $(a_H,a_H)$
is the only sustainable action profile that is consistent with the
self-interest of the users. On the other hand, if the intervention device can apply a
sufficiently high intervention level ($\overline{a}_0 \geq q/b$),
it has the ability to degrade the quality to zero no matter what
action profile the users choose. Since the payoffs of the
users are non-negative, the punishment from using $\overline{a}_0$
is strong enough to make every action profile sustainable. We can also find a
condition on $\overline{a}_0$ that enables $f_{(a_L,a_L)}$ to
sustain the symmetric social optimum $(a_L,a_L)$. With
$\overline{a}_0 \geq (3\sqrt{2} - 4)q/ 4\sqrt{2} b$, $(a_L,a_L)$ is
sustainable and thus $(f_{(a_L,a_L)}, (a_L,a_L))$ is an
intervention equilibrium by Theorem 2.

Figure~\ref{fig:a0vary} plots the set $\mathcal{E}$ for six different
values of $\overline{a}_0$ with parameters $q=12, b=1$, and
$\overline{a}_1 = \overline{a}_2 = 12$. We can see that
$\mathcal{E}$ expands as $\overline{a}_0$ increases, starting from a
single point $(a_H,a_H) = (4,4)$ when $\overline{a}_0 = 0$ to the entire space $A$ when
$\overline{a}_0 \geq q/b= 12$. When $\overline{a}_0 < (3\sqrt{2} -
4)q/ 4\sqrt{2} b \approx 0.51$, only the action profile that is
closest to $(a_L,a_L)=(3,3)$ among those in $\mathcal{E}$ constitutes an
intervention equilibrium. When $\overline{a}_0 \geq (3\sqrt{2} -
4)q/ 4\sqrt{2} b \approx 0.51$, the action profiles in $\mathcal{E}$
that satisfies $a_1 + a_2 = 2a_L = 6$ constitute an intervention
equilibrium, as all of them yield the maximum social welfare.

\begin{figure}%
\centering
\subfloat[][$\overline{a}_0 = 0$]{%
\label{fig:a0vary-a}%
\hspace{-10mm}
\includegraphics[width=0.4\textwidth]{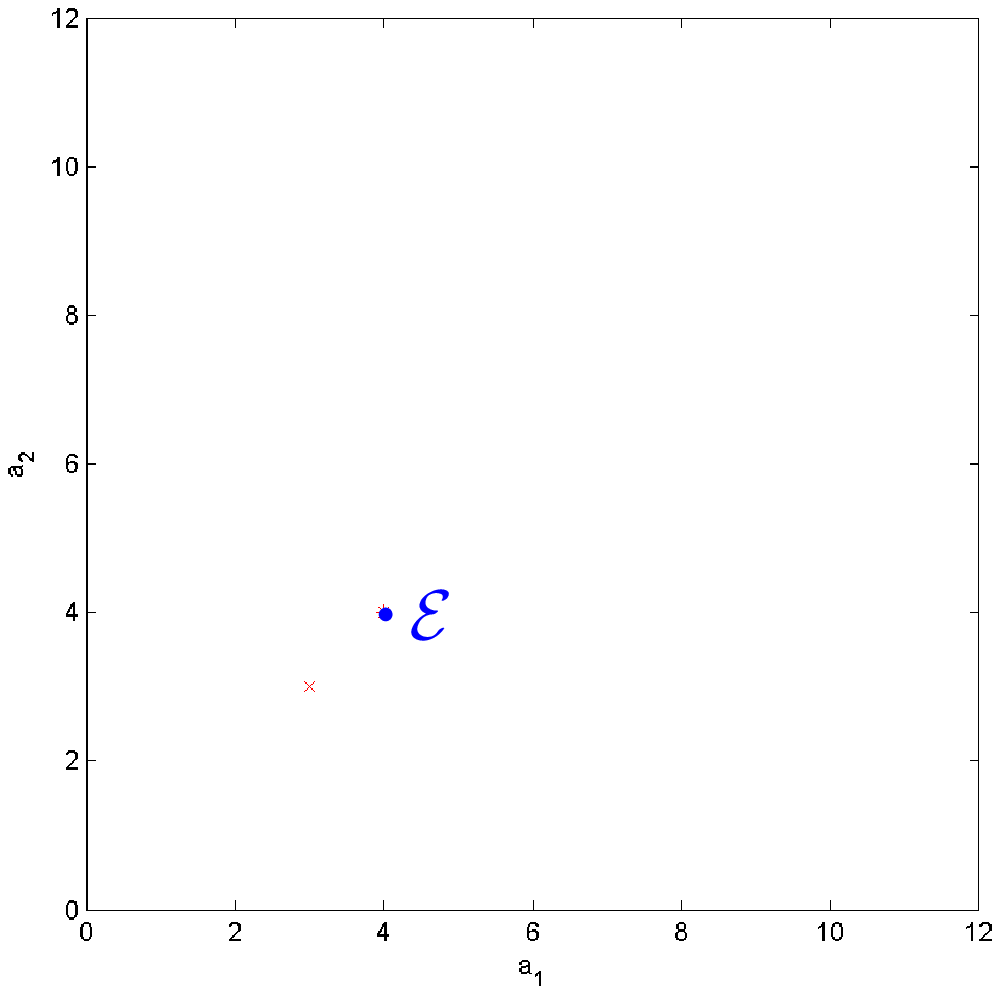}}%
\hspace{14mm}
\subfloat[][$\overline{a}_0 = 0.1$]{%
\label{fig:a0vary-b}%
\includegraphics[width=0.4\textwidth]{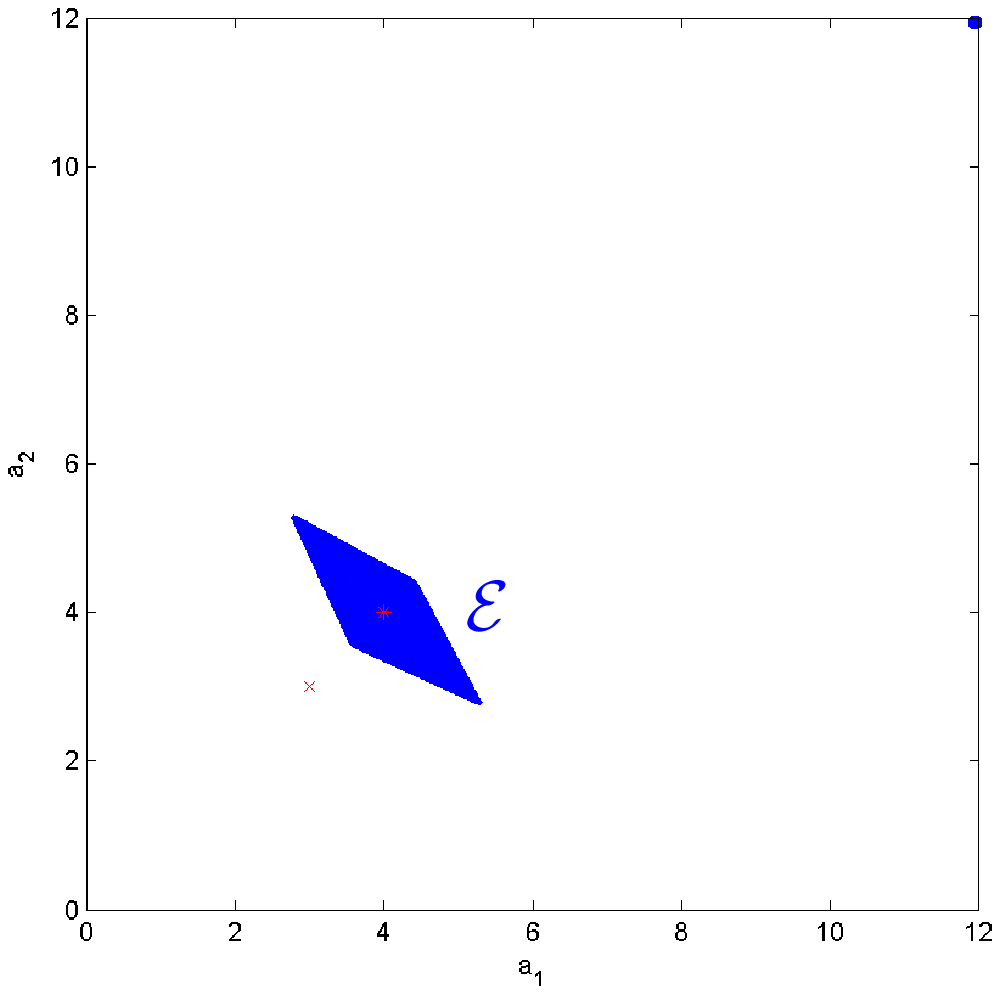}}\\
\subfloat[][$\overline{a}_0 = 0.51$]{%
\label{fig:a0vary-c}%
\hspace{-10mm}
\includegraphics[width=0.4\textwidth]{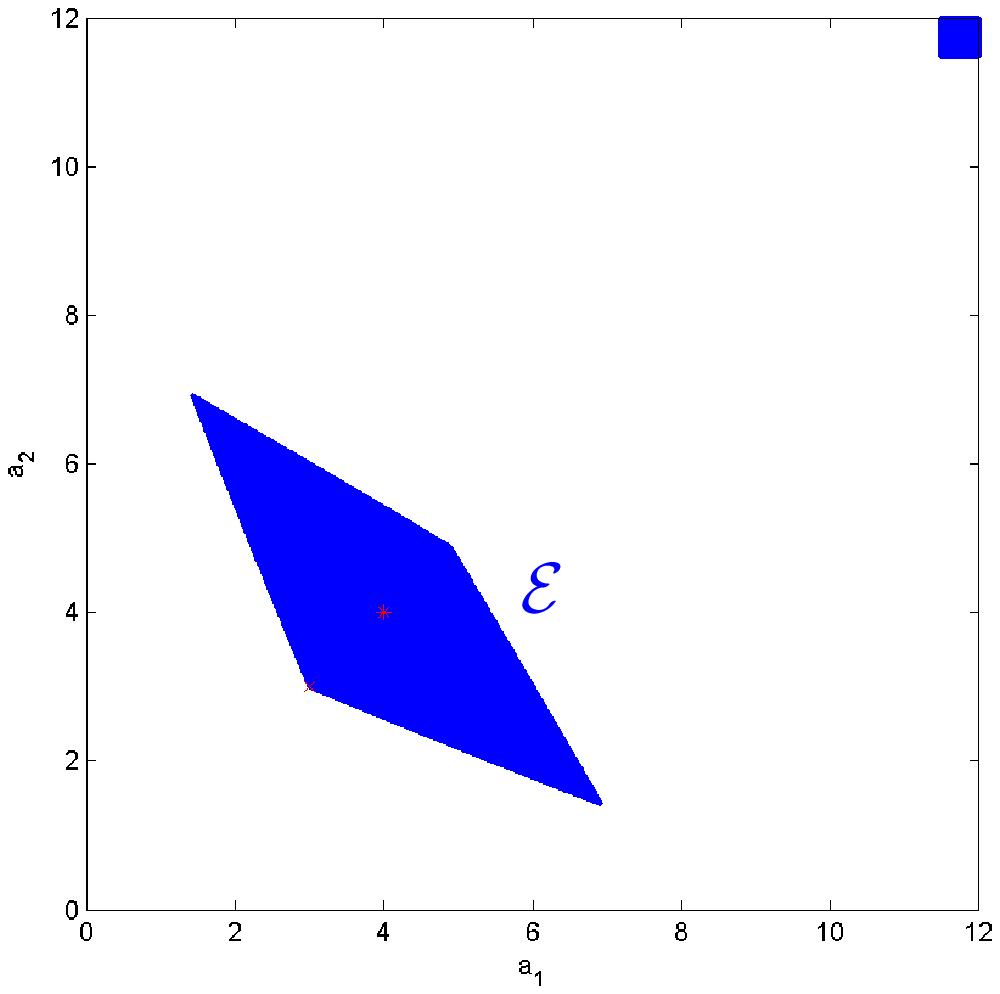}}%
\hspace{14mm}
\subfloat[][$\overline{a}_0 = 5$]{%
\label{fig:a0vary-d}%
\includegraphics[width=0.4\textwidth]{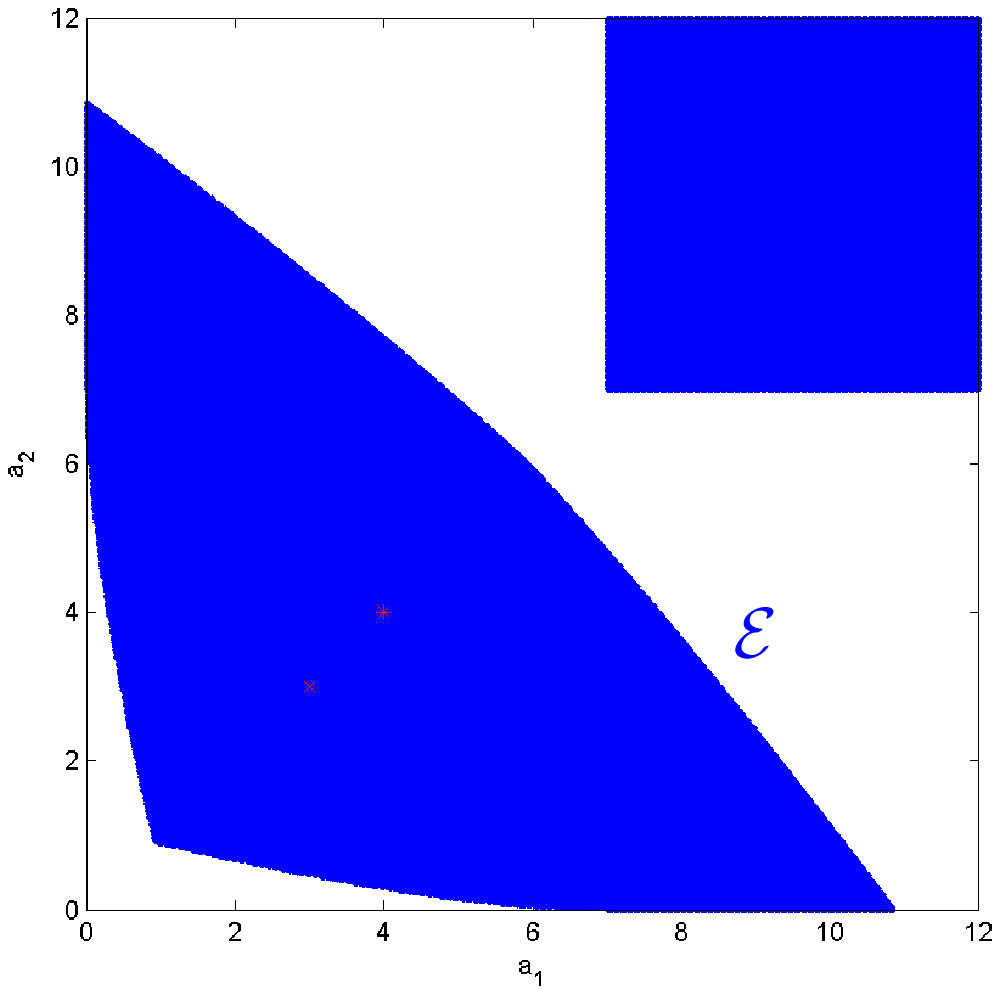}}\\
\subfloat[][$\overline{a}_0 = 10$]{%
\label{fig:a0vary-e}%
\hspace{-10mm}
\includegraphics[width=0.4\textwidth]{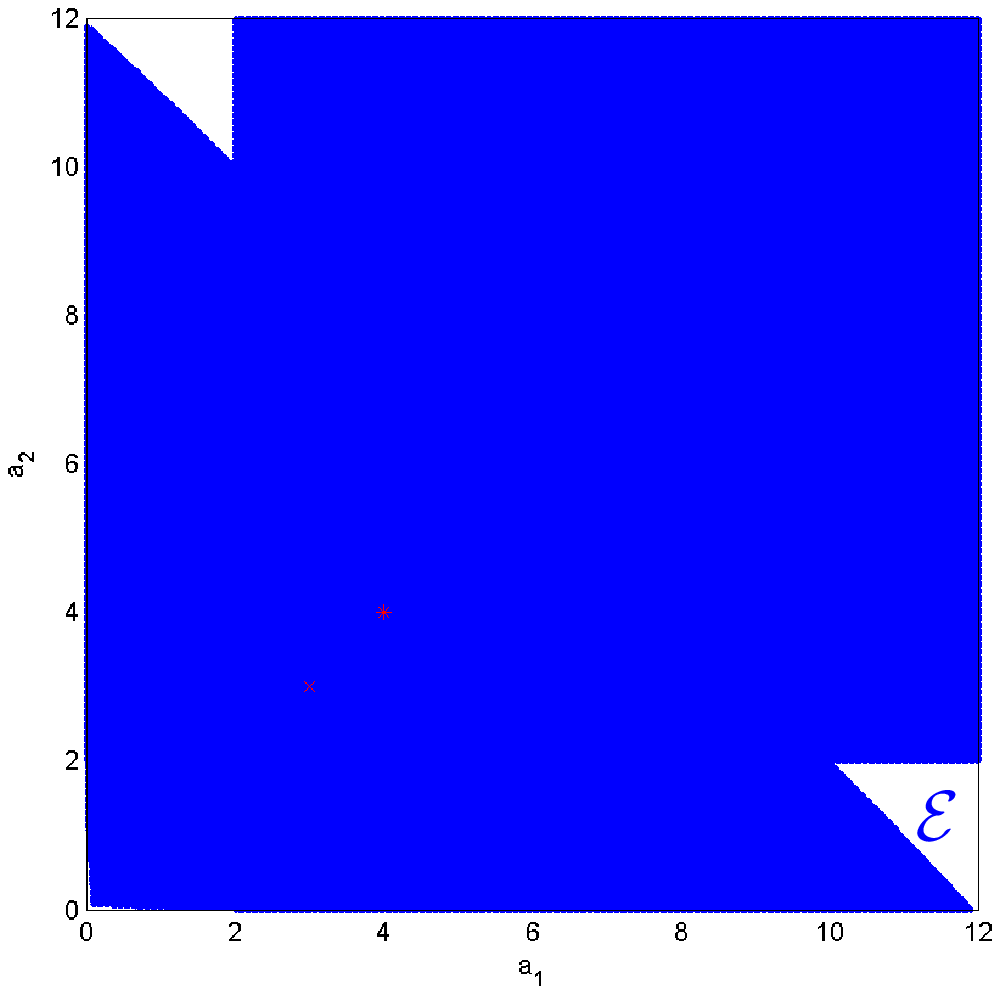}}%
\hspace{14mm}
\subfloat[][$\overline{a}_0 \geq 12$]{%
\label{fig:a0vary-f}%
\includegraphics[width=0.4\textwidth]{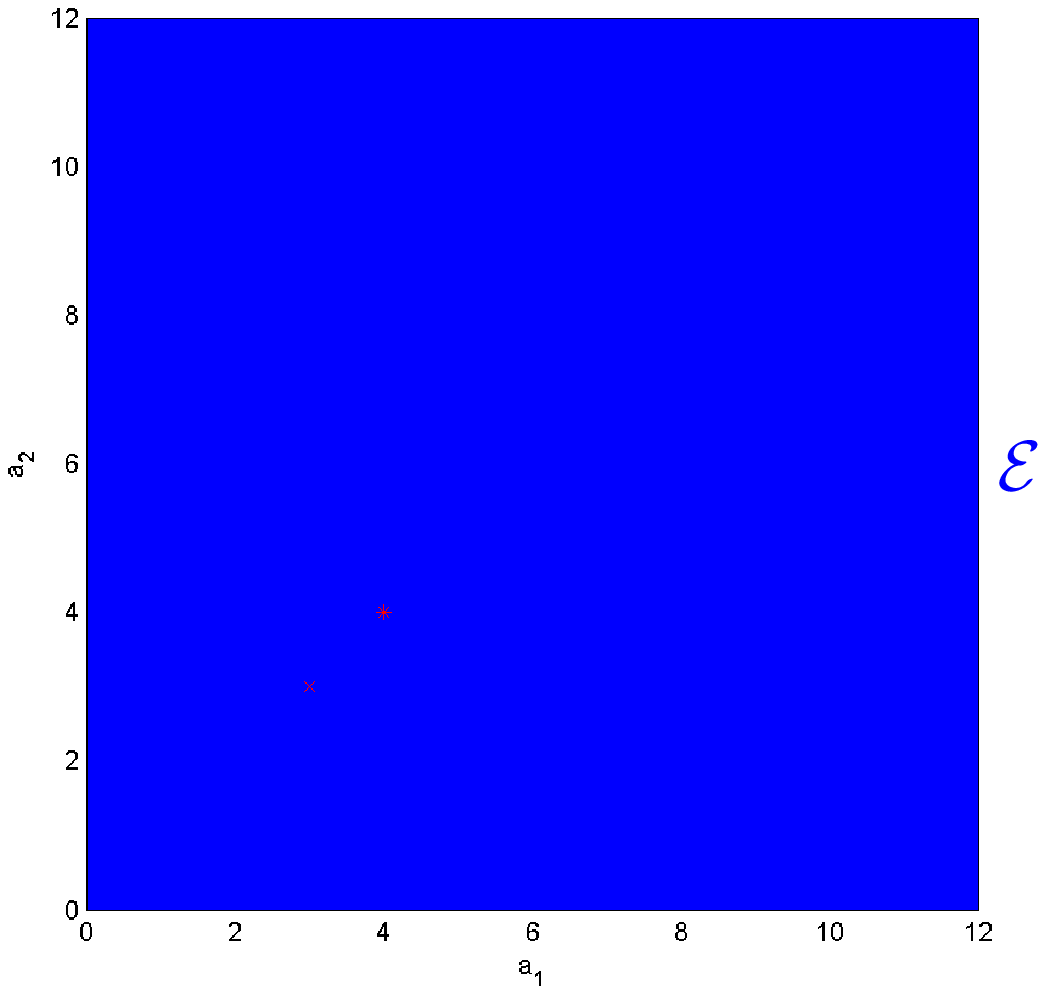}}\\
\caption{Shape of $\mathcal{E}$ for the different values of
$\overline{a}_0$.}
\label{fig:a0vary}%
\end{figure}


\small
\singlespacing

\end{document}